\renewenvironment{titlepage}{%
  \thispagestyle{empty}\setcounter{page}{0}
  \vspace*{\fill}
}{%
  \vspace{3\baselineskip}
  \vspace*{\fill}
  \newpage
}
\newtheorem{thm}{Theorem}[section]
\newtheorem{cor}[thm]{Corollary}
\newtheorem{rem}[thm]{Remark}
\begin{document}
 \begin{titlepage}
   {\centering
     {\Large Financial accumulation implies ever-increasing\par wealth inequality\par}
  \vspace{2\baselineskip}
  Yuri Biondi and Stefano Olla\par
  \vspace{2\baselineskip}
}
  \bigskip

  \noindent
  \textbf{Abstract:}
 {\rm  Wealth inequality is an important matter for economic theory and policy. Ongoing debates have been discussing recent rise in wealth inequality in connection with recent development of active financial markets around the world. Existing literature on wealth distribution connects the origins of wealth inequality with a variety of drivers. Our approach develops a minimalist modelling strategy that combines three featuring mechanisms: active financial markets; individual wealth accumulation; and compound interest structure. We provide mathematical proof that accumulated financial investment returns involve ever-increasing wealth concentration and inequality across individual investors through time. This cumulative effect through space and time depends on the financial accumulation process and holds also under efficient financial markets, which generate some fair investment game that individual investors do repeatedly play through time. }
 \bigskip

 \noindent
 \emph{keywords}: {inequality, economic process, compound return, simple return, minimal institution}\\
\emph{AMS}[2010] : 91B70, \emph{JEL Classification}: C46, D31, D63, E02, E21\\
   \vspace{1cm}
  
  Yuri Biondi\\ CNRS, IRISSO, UMR 7170,
Universit\'e Paris-Dauphine, PSL,\\
75775 Paris-Cedex 16, France \\
{yuri.biondi@gmail.com}\\
orcid.org/0000-0002-5545-5550\\
\bigskip

Stefano Olla\\ CEREMADE, UMR CNRS 7534,
Universit\'e Paris-Dauphine, PSL\\
75775 Paris-Cedex 16, France \\
{olla@ceremade.dauphine.fr}\\
orcid.org/0000-0003-0845-1861\\
\end{titlepage}

\title[Financial accumulation and wealth inequality]{Financial accumulation implies ever-increasing wealth inequality}
\author{Yuri Biondi and Stefano Olla}
\begin{abstract}
  Wealth inequality is an important matter for economic theory and policy. Ongoing debates have been discussing recent rise in wealth inequality in connection with recent development of active financial markets around the world. Existing literature on wealth distribution connects the origins of wealth inequality with a variety of drivers. Our approach develops a minimalist modelling strategy that combines three featuring mechanisms: active financial markets; individual wealth accumulation; and compound interest structure. We provide mathematical proof that accumulated financial investment returns involve ever-increasing wealth concentration and inequality across individual investors most of the time. This cumulative effect through space and time depends on the financial accumulation process and holds also under efficient financial markets, which generate some fair investment game that individual investors do repeatedly play through time. 
\end{abstract}

\address{
Yuri Biondi\\ CNRS, IRISSO, UMR 7170,
Universit\'e Paris-Dauphine, PSL,\\
75775 Paris-Cedex 16, France \\
{yuri.biondi@gmail.com}}


\address{
Stefano Olla\\ CEREMADE, UMR CNRS 7534
Universit\'e Paris-Dauphine, PSL
75775 Paris-Cedex 16, France \\
{olla@ceremade.dauphine.fr}}

\keywords{inequality, economic process, compound return, simple return, minimal institution}
\subjclass[2010]{91B70, \emph{JEL Classification}: C46, D31, D63, E02, E21}

\thanks{\emph{Acknowledgements:} This working paper benefited from informal discussion with the DYNAMETS research group (Dynamic Systems Analysis for Economic Theory and Society) based at University Paris Dauphine PSL. A previous version was presented at Yale University Research Workshop on 24 April 2018.
  We thank Shyam Sunder, Enrico Scalas, Mauro Gallegati, Yonatan Berman and Tae-Seok Jang for their comments and suggestions.
}

\maketitle




\section{Introduction}
\label{sec:introduction}

Wealth inequality is an important matter for economic theory and policy.
Current debates have been discussing recent rise in wealth inequality in connection
with recent development of active financial markets around the world
(\cite{Beck et al. 2007, Piketty 2013,Krugman 2013,Stiglitz 2012,Solow 2014}).
While some argued for the role of financial development in reducing income inequality and benefiting the poorer, others criticize the increasing concentration of financial capital and related income as the main source of raising inequality. Moreover, issues of wealth distribution were raised by the 99\% movement in US in the aftermath of the Global Financial Crisis of 2007-08, claiming that the increased financialisation of economy and society in recent decades has involved an increased appropriation of wealth by the richest 1\%
of the population at the detriment of the remaining 99\%,
implying a more unequal and allegedly unfair wealth distribution.

Existing literature on wealth distribution connects the origins of wealth inequality with a variety of drivers
(\cite{Bertola et al. 2006}, \cite{Snowdon and Vane 2005}).
Already at the beginning of the XX century, pointing to the concentration of wealth in economy and society,
the economist and sociologist V. Pareto suggested the so-called Pareto wealth distribution as an empirical regularity
(\cite{Pareto 1897}), while the economic statistician C. Gini developed ingenious statistical techniques to summarise
wealth inequality through the so-called Gini Index (\cite{Gini 1912}).
In this context, a relevant literature stream draws upon \cite{Champernowne 1953} and \cite{Rutherford 1955}
to develop an elegant formal modelling that explains wealth concentration and the Pareto wealth distribution
under conditions of financial market efficiency and the stochastic distribution of financial returns across individuals
that invest in that market (\cite{Levy 2005,Levy and Levy 2003}).
This modelling strategy considers financial investment as a multiplicative process
closely related to a Kesten process (\cite{Kesten 1973,Redner 1990}).

Existing literature and the current debate on wealth inequality foreshadow some connection between wealth concentration and the financial investment process through active financial markets. Our approach consists in disentangling the minimal components of this process in order to formalise its working and infer general results on its impact on economy and society. In particular, we identify two minimal institutions that surround individual investment strategies: compound return structure, and active financial markets.

On the one hand, compound return structure stands at the core of financial investment process.
It implies that individual investors keep reinvesting financial proceeds together with the previous capital stock
through time and circumstances.
On the other hand, active financial markets assure that this financial investment process
is played as a fair game, involving some market order over return-seeking individual strategies.
Under efficient financial markets, individual investors extract their realised returns from
the same 
distribution, while individual results remain {independent through space and time},
preventing arbitrage opportunities
(\cite{Samuelson 1965,Samuelson 1973,Fama 1995,Biondi and Righi 2017}
providing further literature review).

Our mathematical modelling shall formalise this financial investment process in line with both minimal institutions, in order to study its relationship with relative wealth concentration across individuals in economy and society. In particular, we shall prove that, let alone without countering forces, financial accumulation implies ever-increasing relative wealth inequality through time. We define financial accumulation as the peculiar financial investment process in which financial proceeds are systematically reinvested together with the previous capital stock through time and circumstances.

The rest of the article is organised as follows. The next section summarises our modelling strategy.
Section \ref{sec:model-proof-under} and  \autoref{sec:model-proof} provide mathematical proof of
the relationship between wealth concentration and the financial investment process respectively under compound
and simple return structures. Section~\ref{sec:limits-implications} develops some implications
of and perspectives on our results. Section \ref{sec:concluding-remarks} concludes.

\section{Modelling Strategy}
\label{sec:modelling-strategy}

Existing models of economic process generally combine a multiplicative factor called capital with an additive factor called labour, in order to explain economic growth and study the aggregate distribution of wealth
(\cite{Nirei and Souma 2007} providing further references).
Our modelling strategy focuses only on the capital factor,
in view to disentangle featuring financial investment mechanisms and their impact on economy and society.

Generally speaking, financial investment process under efficient capital markets implies that all and every individual investment strategies generate independent and independently distributed returns extracted from the same distribution for all investors and at every point of time, altough real financial markets factually show some correlation through time and cross-sectionally. Our modelling strategy relaxes this reference framework while featuring cumulative effects through space and time. Our results do not require financial returns to be independent and independently distributed over time and across agents. Only some decorrelation through time is required. Although some previous studies noticed the non-stationarity of wealth distribution driven by financial investment (\cite{Kalecki 1945, Biondi and Righi 2019}), its cumulative effects and implications were generally neglected.

Cumulative effects relate to the financial accumulation process that is embedded in compound return structure. Accordingly, every investor starts investing an initial capital value through time by realising one given return at each point of time. Individual wealth investment is then repeatedly submitted to a series of realised returns that singles out each and every individual investment portfolio. Compound return structure further assures that financial investment proceeds sum-up with the past-period capital stock to be reinvested in the next period, generating individual financial accumulation through time.

Our modelling strategy points to this cumulative effect by featuring its impact on both individual financial accumulation through time, and the relative wealth concentration across individuals at some point of time. Our minimalist modelling strategy enables disentangling key features of financial accumulation process that characterises a large set of real world mechanisms such as saving deposits and mutual investment funds. In this context, it must be stressed that compound return is generally taken as benchmarking reference for assessing financial investment performance.

\section{Model and proof under compound interest structure }
\label{sec:model-proof-under}

Consider a population of $n$ individual investors $i = 1, \dots, n$, holding financial wealth available
for investment $W_{i}(t)$ at some {time $t$. We consider here discrete time dynamics, so $t$ is integer valued.}

We formalise the financial investment process according to the familiar structure of compound return
where the wealth $W_{i}(t+1)$ of investor $i$ at time $t+1$ depends on the past wealth stock $W_{i}(t)$
invested at some return $r_{i,t+1}$ as follows:
\begin{equation}
  W_{i}(t+1)= W_{i}(t)  (1 + r_{i,t+1})
  \label{eq:1}
\end{equation}
where $\{\mathbf r_t = (r_{1,t}, \dots,  r_{n,t})\}_{t\in \mathbb N}$
is an infinite sequence of identically distributed random vectors in $\mathbb R^n$.
We do not require that the components of these random vectors are independent.
Without loss of generality, we may impose $r_{i,t}\ge -1$.
In fact, we do only impose conditions that are required for mathematical proof.
For sake of clarity, if $r_{i,t} \ge 0$,
the investor $i$ maintains or increases its past wealth stock that was invested at time $t-1$,
while that investor loses it at least partly if $r_{i,t} < 0$.

This setting generates the following cumulative process through time from initial time $t=1$ to time $t=T$:
\begin{equation}
  W_{i}(T) = W_{i}(0)  \prod_{t=1}^T \{ 1+ r_{i,t}\}= W_{i}(0)\exp \left\{ \sum_{t=1}^T  Z_{i,t} \right\},
  \label{eq:2}
\end{equation}
where $Z_{i,t} = \log[1+r_{i,t}]$.
This logarithmic transformation defines, for any $i=1,\dots,n$, the individual series of realised financial returns
$\{ Z_{i,t} ; \  t=1,\dots,T \}$ as a sequence of identically distributed random variables,
and we require the following conditions:
\begin{enumerate}[(i)]
\item strictly positive finite second moments $0< \mathbb E Z_{i,t}^2 < +\infty$ \label{var1},
\item  $|\mathbb E \left(Z_{i,t} Z_{j,t}\right)|^2 < \mathbb E Z_{i,t}^2 \mathbb E Z_{j,t}^2$ for $i\neq j$ and any $t\ge 1$.
  \label{var2}
\end{enumerate}
Defining $\{Y_{i,j}(t) = Z_{i,t} - Z_{j,t}, t\in \mathbb N\}$, for $i\neq j$, by (\ref{var2}) 
we have that  $\mathbb E (Y_{i,j}(t)^2)>0$.
Furthermore we assume the following condition on the temporal dependence of $Y_{i,j}(t)$:
\begin{equation}
  \label{eq:3}
  \inf_{i\neq j} \left|\sum_{t=1}^T Y_{i,j}(t)\right| \mathop{\longrightarrow}_{T\to\infty} + \infty
  \qquad \text{in probability},
\end{equation}
which means that for any $0<M<+\infty$:
\begin{equation*}
 \lim_{T\to\infty}\mathbb P\left( \inf_{i\neq j} \left|\sum_{t=1}^T Y_{i,j}(t)\right| \ge M\right) = 1.
\end{equation*}
There are many explicit conditions that imply \eqref{eq:3} for different reasons:
\begin{itemize}
\item If $\mathbb E(Z_{i,t}) \neq \mathbb E(Z_{j,t})$ for all $i\neq j$.
\item If $\{Y_{i,j}(t)\}_t$ are \emph{martingale increments}, i.e. $\mathbb E(Y_{i,j}(t)| Y_{i,j}(s), s<t) = 0$.
\item If $\{Y_{i,j}(t)\}_t$
  is strongly mixing satisfying a condition for a limit theorem (cf. \cite{Ibragimov and Linnik 1971}).
\end{itemize}

Define $X_i(T) = \sum_{t=1}^T Z_{i,t}$,  $W_i(T) = W_{i}(0)\; e^{X_i(T)}$ and
\begin{equation*}
  \theta_{i}(T) = \frac{W_{i}(T)}{\sum_j W_j(T)}, \qquad i=1, \dots, n,
\end{equation*}
and $\tilde  \theta^T_{i}(T)$ the corresponding ranked sequence at time $T$:
$\tilde\theta_{i}(T) \le \tilde\theta_{i+1}(T)$.
This indicator ranks all the relative individual wealth $W_i(T)$. 
In particular $\tilde W^T_i(0)$ is the initial position for the sequence ranked at time $T$.
\begin{thm} Under the above conditions, the following limits holds in probability:
  \begin{equation}
    \label{eq:1}
    \lim_{T\to\infty} \tilde\theta_{n}(T) = 1, \qquad  \lim_{T\to\infty} \tilde\theta_{i}(T)  = 0 \quad \forall i\neq n  
\end{equation}
\end{thm}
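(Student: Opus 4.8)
The plan is to reduce the statement to the pairwise wealth ratios, on which hypothesis \eqref{eq:3} acts directly. Since $W_i(T) = W_i(0)\,e^{X_i(T)}$ with $X_i(T) = \sum_{t=1}^T Z_{i,t}$, for any pair $i\neq j$ one has
\begin{equation*}
  \frac{\theta_i(T)}{\theta_j(T)} \;=\; \frac{W_i(T)}{W_j(T)} \;=\; c_{ij}\,\exp\!\Big(\sum_{t=1}^T Y_{i,j}(t)\Big),
  \qquad c_{ij} := \frac{W_i(0)}{W_j(0)} \in (0,\infty).
\end{equation*}
Condition \eqref{var1} forces each $Z_{i,t}$ to be finite almost surely, so the $W_i(T)$ are strictly positive and these ratios are well defined; and since $n$ is fixed there are finitely many pairs, so $c_{\min} := \min_{i\neq j}c_{ij}$ and $c_{\max} := \max_{i\neq j}c_{ij}$ are positive and finite.

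Next I would fix $M$ large enough that $c_{\min}e^{M} > 1$ and work on the event $E_M(T) := \big\{\inf_{i\neq j}\big|\sum_{t=1}^T Y_{i,j}(t)\big| \ge M\big\}$. Let $k = k(T)$ be any index realising $\theta_k(T) = \max_i\theta_i(T)$. For each $j\neq k$ we have $\theta_j(T)/\theta_k(T) \le 1$; if the exponent $\sum_{t\le T}Y_{j,k}(t)$ were $\ge M$ this ratio would be at least $c_{\min}e^{M} > 1$, a contradiction, so on $E_M(T)$ that exponent is $\le -M$, whence $\theta_j(T)/\theta_k(T) \le c_{\max}e^{-M}$. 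Summing over the $n-1$ indices $j\neq k$ yields, on $E_M(T)$,
\begin{equation*}
  \tilde\theta_n(T) \;=\; \theta_k(T) \;=\; \Big(1 + \sum_{j\neq k}\frac{\theta_j(T)}{\theta_k(T)}\Big)^{-1} \;\ge\; \frac{1}{1+(n-1)c_{\max}e^{-M}} \;=:\; 1 - \delta(M),
\end{equation*}
with $\delta(M)\to 0$ as $M\to\infty$.

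Finally I would combine the two pieces. Given $\varepsilon>0$, choose $M$ with $\delta(M)<\varepsilon$; hypothesis \eqref{eq:3} gives $\mathbb{P}(E_M(T))\to 1$ as $T\to\infty$, hence $\mathbb{P}(\tilde\theta_n(T)\ge 1-\varepsilon)\to 1$, which is the first limit. For the second, since the ranked shares are nonnegative and sum to $1$ we have $\tilde\theta_i(T)\le\tilde\theta_{n-1}(T)\le 1-\tilde\theta_n(T)$ for every $i\neq n$, so $\mathbb{P}(\tilde\theta_i(T)\le\varepsilon)\to 1$ as well. Both convergences are in probability. I do not expect a genuine obstacle: conditions \eqref{var1}--\eqref{var2} only make the model well posed and guarantee $\mathbb{E}(Y_{i,j}(t)^2)>0$, while all the real content is carried by \eqref{eq:3}. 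The one point needing care is that the maximiser $k(T)$ is random and depends on $T$, so the bound $\theta_j(T)/\theta_k(T)\le c_{\max}e^{-M}$ must hold uniformly over the finitely many candidate values of $k$ — which is precisely what restricting to $E_M(T)$ delivers.
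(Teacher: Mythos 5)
Your proof is correct and follows essentially the same route as the paper: bound the top share by $\bigl(1+\sum_{j\neq k}\theta_j(T)/\theta_k(T)\bigr)^{-1}$, control the pairwise ratios through the bounded initial-wealth ratios and hypothesis \eqref{eq:3}, and let $M\to\infty$, with the second limit following because the shares sum to one. If anything, your contradiction step (choosing $M$ with $c_{\min}e^{M}>1$ so that the exponent attached to the maximiser must be $\le -M$) treats more carefully the one point the paper glosses over, namely that the ranking is by wealth $W_i(T)$ rather than by $X_i(T)$, so the paper's intermediate inequality $\inf_{j\neq n}\bigl(\tilde X_n(T)-\tilde X_j(T)\bigr)\ge\inf_{i\neq j}\bigl|X_i(T)-X_j(T)\bigr|$ is only legitimate once the infimum exceeds the spread of the log initial wealths.
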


\begin{proof}
Let $\tilde X_i(T), \tilde W_i(T)$ be the corresponding ranked sequences at time $T$,
  \begin{equation*}
    \begin{split}
     \tilde\theta_{n}(T) = \left(1 + \sum_{j\neq n}\tilde W_j(T) \tilde W_n(T)^{-1}\right)^{-1}
     = \left(1 + \sum_{j\neq n} \frac{\tilde W^T_j(0)}{\tilde W_n^T(0)} e^{\tilde X_j(T) - \tilde X_n(T)}\right)^{-1} \\
     \ge  \left(1 + n \left(\sup_{j\neq i} \frac{ W_j(0)}{ W_i(0)} \right)
       e^{-\inf_{j\neq n}(\tilde X_n(T) - \tilde X_j(T))}\right)^{-1}.
\end{split}
\end{equation*}
Since $W_j(0)>0$ for all $j$ we have that $\sup_{j\neq i} \frac{ W_j(0)}{ W_i(0)} < \infty$. Furthermore 
\begin{equation*}
  \inf_{j\neq n} \left(\tilde X_n(T) - \tilde X_j(T)\right) \ge \inf_{i\neq j} |X_i(T) - X_j(T)| =
  \inf_{i\neq j} |\sum_{t=1}^T Y_{i,j}(t)|,
\end{equation*}
that implies $ \mathbb P\left(\tilde\theta_{n}(T) \ge 1-\epsilon\right)
\mathop{\longrightarrow}_{T\to\infty} 1$ for any $\epsilon>0$.
\end{proof}


\begin{rem}
  Notice that the convergence is only in probability, and not almost sure. This means that
  a single realization of a trajectory of  $(\tilde\theta_{n}(t), t\ge 0)$ oscillates almost surely,
  but the probability to find $\tilde\theta_{n}(t)$ strictly smaller than one converges to 0 as $t\to \infty$.
  More precisely, a more accurate analysis shows that
  \begin{equation}
    \label{eq:5}
    \limsup_{T\to\infty} \tilde\theta_{n}(t) = 1, \qquad \liminf_{T\to\infty} \tilde\theta_{n}(t) = 0, \quad
    \text{almost surely.}
  \end{equation}

\end{rem}

This result implies that financial accumulation through time by individual investors leads
to ever-increasing wealth concentration and inequality across them most of the time.
This result holds for whatever initial wealth distribution and whatever mean return for financial investments over time.

The result does not imply that one single individual will own all the wealth indefinitely, but in a large time scale it can move from one investor to another. The result says that most the time wealth remains
concentrated in the hand of one individual.

\section{Model and proof under simple return structure}
\label{sec:model-proof}

The previous section proved the link between financial accumulation through time and ever-increasing wealth concentration across individual investors. This section provides a complementary proof by studying simple return structure. The latter excludes financial accumulation from the financial investment process. Financial proceeds are then consumed or held for precaution (if positive), and they are not reinvested together with the previous capital stock of wealth. Simple return structure corresponds to the historical cost accounting rule of financial performance, which assumes an additive process through time between invested capital stock and generated income flow. Therefore the capital stock is reproduced more than accumulated. Financial capital is remunerated as a productive factor but not financially accumulated over time. We formalise this additive process as follows:
\begin{equation}
  W_i(T) = W_i(0) \left( 1+ \sum_{t=1}^T r_{i,t}\right) 
  \label{eq:4}
\end{equation}
assuming that $r_{i,t} >= -1$. 
For sake of simplicity, we do not impose further restrictions to make this investment process more realistic.
Note that individual wealth may become negative at some point of time. 
We include only conditions that are required for mathematical proof.
This formalisation defines $\{ \mathbf r_t = (r_{1,t}, \dots, r_{n,t}),  t = 1,\dots,T\}$
as {an ergodic sequence of positive random vectors} with finite expectation:
$E r_{i,t} = \mu_i < +\infty$.
\begin{thm}
With probability one, we have
\begin{equation}
  \label{eq:2}
  \lim_{T\to\infty} \theta_{i}(T) = \ \frac{W_i(0)\mu_i}{\sum_{ j=1}^n W_j(0) \mu_j} .
\end{equation}
\end{thm}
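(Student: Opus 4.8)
The plan is to reduce the statement to the pointwise (Birkhoff) ergodic theorem applied coordinate by coordinate. First I would observe that, since $\{\mathbf r_t\}_{t\ge 1}$ is a stationary ergodic sequence of random vectors in $\mathbb R^n$, each scalar sequence $\{r_{i,t}\}_{t\ge 1}$ is itself stationary and ergodic (it is the image of the underlying ergodic shift under the $i$-th coordinate projection), and by assumption it has finite mean $\mathbb E r_{i,t}=\mu_i$. Birkhoff's pointwise ergodic theorem then gives, for each fixed $i$,
\begin{equation*}
  \frac1T\sum_{t=1}^T r_{i,t}\ \mathop{\longrightarrow}_{T\to\infty}\ \mu_i \qquad \text{almost surely.}
\end{equation*}

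Next I would rescale the individual wealths. Dividing \eqref{eq:4} by $T$,
\begin{equation*}
  \frac{W_i(T)}{T} = W_i(0)\left(\frac1T + \frac1T\sum_{t=1}^T r_{i,t}\right)\ \mathop{\longrightarrow}_{T\to\infty}\ W_i(0)\mu_i \qquad \text{almost surely.}
\end{equation*}
Intersecting the $n$ almost-sure events on which these convergences hold (a finite intersection, hence still of full probability), all of them hold simultaneously on a single event $\Omega_0$ with $\mathbb P(\Omega_0)=1$.

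On $\Omega_0$ I would then write $\theta_i(T)$ as a ratio of the rescaled quantities,
\begin{equation*}
  \theta_i(T) = \frac{W_i(T)}{\sum_{j=1}^n W_j(T)} = \frac{W_i(T)/T}{\sum_{j=1}^n W_j(T)/T},
\end{equation*}
and pass to the limit in numerator and denominator separately: the numerator tends to $W_i(0)\mu_i$ and the denominator to $\sum_{j=1}^n W_j(0)\mu_j$. Since $W_j(0)>0$ for every $j$ and the returns $r_{j,t}$ are positive, at least one $\mu_j$ is strictly positive (barring the degenerate case $\mathbf r_t\equiv 0$), so the limiting denominator is strictly positive and the ratio converges to the announced value $\tfrac{W_i(0)\mu_i}{\sum_j W_j(0)\mu_j}$.

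There is no genuine obstacle here; the only points needing a little care are (a) checking that ergodicity of the vector sequence descends to each coordinate so that Birkhoff applies, which is immediate, and (b) ensuring the limiting denominator $\sum_j W_j(0)\mu_j$ does not vanish, which is precisely where the positivity hypothesis on the $r_{i,t}$ enters; the trivial case $\mathbf r_t\equiv 0$ a.s. is degenerate and can be excluded.
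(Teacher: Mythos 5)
Your argument is correct and is essentially the paper's own proof: both divide numerator and denominator of $\theta_i(T)$ by $T$ and invoke the pointwise ergodic theorem to get $W_i(T)/T \to W_i(0)\mu_i$ almost surely, then pass to the limit in the ratio. You merely spell out the coordinate-wise application of Birkhoff's theorem and the non-vanishing of the limiting denominator, details the paper leaves implicit.
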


\begin{proof}
  This is a direct consequence of the ergodic theorem. In fact with probability one
  \begin{equation*}
    \theta_{i}(T) \ =\ \frac{W_{i}(T)/T}{\sum_j W_j(T)/T} \ \mathop{\longrightarrow}_{T\to\infty}
    \ \frac{W_i(0)\mu_i}{\sum_ j W_j(0) \mu_j} .
  \end{equation*}
\end{proof}
\begin{cor}
  If $\mu_j = \mu$ for any $j$, then $ \theta_{i}(T) \mathop{\longrightarrow}_{T\to\infty}\theta_i(0)$.
\end{cor}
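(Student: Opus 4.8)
The plan is to obtain this as an immediate specialization of the preceding theorem, with no new probabilistic input. That theorem asserts that, with probability one,
\begin{equation*}
  \lim_{T\to\infty}\theta_i(T) \;=\; \frac{W_i(0)\,\mu_i}{\sum_{j=1}^n W_j(0)\,\mu_j}.
\end{equation*}
First I would impose the hypothesis $\mu_j=\mu$ for every $j$ and substitute it into both the numerator and the denominator. Then $\mu$ occurs as a common factor; since the $r_{i,t}$ are positive random vectors we have $\mu>0$, so this factor cancels while the denominator stays strictly positive (the initial wealths $W_j(0)$ being positive). The right-hand side therefore collapses to $W_i(0)/\sum_{j=1}^n W_j(0)$.

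Second, I would identify this surviving ratio with $\theta_i(0)$. Evaluating the definition of $\theta_i$ at $T=0$ in the simple-return relation \eqref{eq:4} gives $W_i(0)=W_i(0)\bigl(1+\sum_{t=1}^0 r_{i,t}\bigr)$, since the empty sum vanishes, and hence $\theta_i(0)=W_i(0)/\sum_{j=1}^n W_j(0)$. Combining the two steps yields $\lim_{T\to\infty}\theta_i(T)=\theta_i(0)$ almost surely, which is exactly the assertion.

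There is essentially no obstacle here: the corollary is a one-line consequence of the theorem, and the only point that deserves a word is the cancellation of the common mean $\mu$, which is legitimate precisely because $\mu>0$ keeps $\sum_j W_j(0)\mu$ away from zero. No ergodic theorem and no further estimates are needed beyond those already invoked in the proof of the theorem; the statement is a pure specialization.
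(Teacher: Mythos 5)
Your proof is correct and is exactly the argument the paper intends: the corollary is an immediate specialization of the theorem, with the common factor $\mu>0$ cancelling to leave $W_i(0)/\sum_j W_j(0)=\theta_i(0)$. Nothing further is needed.
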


\begin{rem}
  No condition is required on the correlations between the components of
  the random vector $\mathbf r_t = (r_{1,t}, \dots, r_{n,t})$.
\end{rem}

  \begin{rem} Notice that
  the convergence in \eqref{eq:2} is with probability one, stronger than the convergence in probability
  in  \eqref{eq:1}.
\end{rem}

\section{ Limits and implications}
\label{sec:limits-implications}

Our result concerning wealth concentration are twofold. In the simple interest return scenario, with probability one wealth concentration converges to certain values determined by the initial conditions. In constrast, in the compound interest return scenario, the probability that maximal wealth concentration does not occur tends to zero.
More precisely, most of the time, there exists one individual who owns all the wealth,
though not always the same individual through time: in a very large time scale
wealth can move from one lucky individual to another, passing through a fast redistribution. 

Our result implies therefore that financial accumulation through time implies ever-increasing wealth concentration and inequality across individuals.
Wealth distribution tends to degenerate so that, at a certain point of time,
a single individual eventually owns virtually all wealth, relatively speaking.
This result holds for whatever initial wealth distribution, and it critically depends
on some variance in financial returns and some return decorrelation through time.
Moreover, this result does not depend on a specific distribution of returns as long as
it involves some finite positive variance of them.
Therefore, even efficient financial markets -- implying a fair investment game
that individual investors do repeatedly play through time -- involve ever-increasing wealth inequality.
{Since this result depends only on some temporal decorrelation through time,}
it may be extended to include mutual investment opportunities that deliver
a joint return to subsets of individual investors at one point of time.
Furthermore, our analysis may be extended by studying the likely transition
time patterns that generate wealth concentration and inequality over time.

Our analysis identifies the financial accumulation process as an important driver of increasing wealth inequality in economy and society. This effect does not depend on the condition that capital investment is remunerated as a productive factor, but on the peculiar opportunity that its financial proceeds are systematically reinvested, compounding financial returns. Ever-increasing inequality depends on some temporal decorrelation between individual return trajectories.

Further mechanisms may be introduced which counter this wealth inequality effect. Countering forces may include aggregate mechanisms that redistribute wealth $W_{i,t}$ across individuals at some point of time, compensating for increasing wealth concentration and inequality over time. An obvious candidate mechanism may be taxation.
Another countering force may emerge by reaching some limit to economic process. Although generally assumed by economic models, it may be unrealistic to assume compound returns that last indefinitely over time and circumstances
(\cite{Voinov and Farley 2007,Biondi 2011}). This implicitly maintains constant returns to scale for whatever involved size, while some decreasing returns may occur after some threshold due to natural and artificial limits to growth. Furthermore, recent studies suggest that higher inequality may reduce growth, limiting financial investment opportunities
(\cite{Berg and Ostry 2013,Ostry and al. 2014}). However, according to our analysis, as long as compound returns apply, financial accumulation at those returns involves ever-increasing wealth concentration and inequality.

\section{Concluding remarks}
\label{sec:concluding-remarks}

Our approach focalises only on one dimension of economic process, namely its capital factor and related financial investment through time and circumstances. Through formal modelling, we disentangle its featuring mechanisms and study financial investment process respectively under compound and simple return structures. By construction, our modelling strategy neglects interaction with other dimensions that may affect its impact on inequality. Mathematical proofs identify the set of conditions under which financial accumulation generates ever-increasing wealth concentration and inequality. Let alone without countering forces, compound return structure constitutes then an important driver for inequality in economy and society.

Recent evolution in economy and society has been featured by increasing financial investment opportunities through active capital markets around the world. Our result corroborates that this phenomenon may have contributed to increase wealth concentration and inequality across individuals in recent decades.

From an epistemological viewpoint, our approach points to the interaction between microscopic and macroscopic dimensions of the economic system under investigation. One cumulative effect concerns compound financial returns through time at the level of every individual investment portfolio. Another cumulative effect concerns wealth concentration across population of individual investors at every point of time.

Our approach contributes therefore to include space heterogeneity and time evolution into economic modelling. Concerning time, financial accumulation has been somewhat neglected because modelling strategies were considering one-period or two-period horizons, while the compound interest structure plays its featuring role when at least three successive periods are considered. Concerning space, our approach introduces a population of heterogeneous individuals. Individual financial investment trajectories constitute a heterogeneous set of financial accumulation processes performed under active financial markets through time and circumstances.

  \end{document}